\newcommand{\be}{\begin{equation}}
\newcommand{\ee}{\end{equation}} 
\newcommand{\bq}{\begin{eqnarray}} 
\newcommand{\eq}{\end{eqnarray}}
\newcommand{\f}{\frac}
\newcommand{\E}{{\cal E}}
\newcommand{\D}{{\cal D}}
\newcommand{\proofend}{\raisebox{1.3mm}{\fbox{\begin{minipage}[b][0cm][b]{0cm}
\end{minipage}}}}
\newenvironment{proof}{\noindent{\it Proof:} }{\mbox{}\hfill \proofend\\\mbox{}}
\newtheorem{lemma}{Lemma}
\begin{document}

\title{On equivalent resistance of electrical circuits}

\author{Mikhail Kagan} 
\email{mak411@psu.edu}
\affiliation{Division of Science and Engineering, The Pennsylvania State University, 
Abington College, Abington, PA 19116, USA}

\date{\today}

\begin{abstract}
While the standard (introductory physics) way of computing the equvalent resistance of {\em non-trivial} electrical ciruits is based on Kirchhoff\rq{}s rules, there is a mathematically and conceptually simpler approach, called the method of {\em nodal potentials}, whose basic variables are the values of electric potential at the circuit\rq{}s nodes. 
In this paper, we review the method of nodal potentials
and illustrate it using the Wheatstone bridge as an example. At the end, we derive---in a closed form---the equivalent resistance of a generic circuit, which we apply to a few sample circuits. The final result unveils a curious interplay between electrical circuits, matrix algebra, and graph theory and its applications to computer science. 
The paper is written at a level accessible by undergraduate students 
 who are familiar with matrix arithmetic. For the more inquisitive reader, additional proofs and technical details are provided in the appendix. 
\end{abstract}
\maketitle

\section{Introduction} 


The first thorough mathematical description of electrical circuits goes back to Gustav Kirchhoff. \cite{Kirchhoff} Ever since, this topic has given rise to a great number of pedagogical articles. On one hand, circuits are a source of elegant problems and solutions \cite{Aitchison} or interesting experiments. \cite{Denardo, Mead, Gordon, Falcon} On the other hand, electrical circuits can be used to visually and intuitively illustrate some complicated (or often misunderstood by students \cite{Rosenthal, Brown}) electrodynamical concepts. \cite{Jackson} 
Some authors apply powerful methods known from elsewhere to compute the equivalent resistance of nontrivial electrical circuits. \cite{Jeng, Cserti, Woong_Kook} In this paper, we primarily focus on computing the equivalent resistance of a generic circuit. Remarkably, unlike many of the works cited above, the method we employ here doesn't require the individual resistances 
 to have specific values, e.g., to be all equal, or to form a symmetric or lattice-like structure. At the same time, this method is simple enough to be accessible to students familiar with the basics of matrix arithmetic.

From the mathematical perspective, a resistive electric circuit can be understood as a graph whose nodes are connected via links (edges) 
with assigned numerical values of resistance ($R$). In addition, two nodes are assumed to be ``connected to the battery,'' which fixes the \textit{potential difference} (voltage, $\E$) between the two nodes. The battery gives rise to currents flowing along the edges of the circuit. These edge currents ($I$) are related to the potential differences across the corresponding nodes ($\Delta V$) via Ohm\rq{}s law:
\be\label{Ohm}
I=\f{\Delta V}{R}.
\ee
The total current flowing through all the edges coming out of a battery node is called the output current ($I_{\rm out}$). The total current that flows into the other battery node is equal to the output current. One of the important characteristics of an electrical circuit is its {\em equivalent resistance} ($R_{\rm eq}$). It is defined as the resistance of a single resistor (edge) that, if it were to replace the whole circuit, would result in the same amount of the output current. In other words,
\be\label{R_eq}
R_{\rm eq}=\f{\E}{I_{\rm out}}.
\ee

Although the output current is explicitly present in the definition of equivalent resistance, it is clear (e.g., on dimensional grounds) that $R_{\rm eq}$ depends only on the given edge resistances. In some simple cases, it can be computed without finding the output current. For instance, if a node is only shared by two edges, such edges are said to be connected \textit{in series}, and the individual resistances are merely added. 
If two resistors are on edges connected across the same pair of nodes 
such a connection is called \textit{in parallel}, and the equivalent resistance is computed as the reciprocal of the sum of reciprocals. Both of these situations can be generalized straightforwardly for more than two resistors:
		\bq
		R_{\rm eq} = &R_1 + R_2+\ldots, \quad\quad\quad &\mbox {for connection in series;} \\
		R_{\rm eq} =&\left(R_1^{-1} + R_2^{-1}+\ldots\right)^{-1}, \quad &\mbox {for connection in parallel.} 
		\eq
Some larger circuits allow for reduction by identifying sub-circuits whose elements are connected either in series or in parallel. Replacing such sub-circuits by single equivalent resistances turn by turn may result in a trivial circuit, thus giving an algorithm for computing equivalent resistance. 

At the same time, it is clear that not every circuit can be simplified this way. For example, a circuit with no parallel edges and whose nodes are at least three-valent (such nodes are referred to as \textit{junctions}) cannot be reduced in the above sense. The simplest non-simplifiable circuit is depicted in Fig.~\ref{Fig:Wheatstone}.
\begin{figure}[h!] 
\centerline{\includegraphics[width=6cm, keepaspectratio]{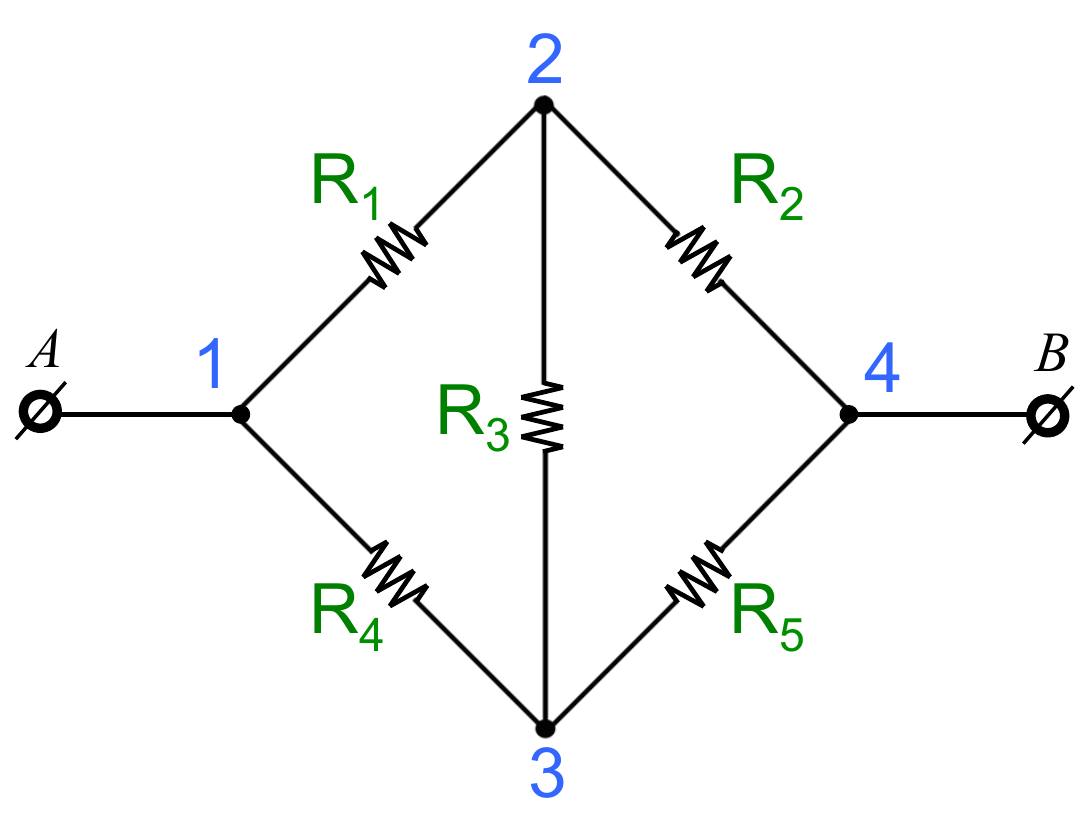}} \caption{The simplest non-simplifiable circuit (Wheatstone bridge). There are no connections in series or in parallel.\label{Fig:Wheatstone}} 
\end{figure}
It is easy to see that there are no elements connected in series or parallel. In order to determine the equivalent resistance of such a circuit, one typically introduces the unknown edge currents and writes down Kirchhoff\rq{}s rules. After solving those equations for the unknown currents, one can compute the output current and obtain the equivalent resistance from Eq.~(\ref{R_eq}). 


There is also an alternative approach to finding unknown quantities associated with electrical circuits, called the {\em method of nodal potentials} (see, for example, Refs.~\onlinecite{Reitz, DeCarlo}; there are also variational alternatives to Kirchhoff's loop rule presented in Ref.~\onlinecite{Variation}). In this method the basic variables are the potentials at the circuit's nodes rather than the edge currents. The advantage of this method is that it usually involves significantly fewer variables than in Kirchhoff's framework. In addition, Kirchhoff's rules come in two types (junction and loop) which have distinct algebraic structure. On the contrary, the equations for the unknown nodal potentials are all of one type (following from the junction rule).  
This presents a major simplification for the analytical description of electrical circuits, as well as for finding various quantities of interest.

This paper is organized as follows. We start by reviewing the method of nodal potentials for electrical circuits, highlighting some immediate implications of the formalism and working out a sample circuit (the Wheatstone bridge). For generic values of the edge resistances, the latter is somewhat nontrivial when using Kirchhoff's rules, \cite{Students} but is quite straightforward using the method of nodal potentials (see Eq.~(\ref{CeqWheatstone}) for the equivalent conductance). In Section \ref{Sec:SigmaEq}, we then proceed to derive a closed formula for the equivalent resistance/conductance of an arbitrary resistive circuit (see Eq.~(\ref{Ceq})). Related results exist in the graph theoretical literature,\cite{Wu, R-eff} expressing {\em resistance distance} of a graph with 1-$\Omega$ resistors on each edge. Ref. \onlinecite{R-eff} 
utilizes the Moore-Penrose (pseudo) matrix inverse and does not lead to easily derived conclusions. 
In this paper, the derivations, as well as the final result, are quite intuitive and accessible to undergraduate students familiar with basic matrix arithmetic. In the last section, we discuss generalizations of the formula, as well as its possible relation to matrix algebra and graph theory and its applications to computer science.


\section{Review of the method of nodal potentials}

In this section we review a convenient description of resistive DC-circuits containing one battery and investigate its implications. We shall see how the method of nodal potentials can simplify the analysis of an electric circuit compared to the standard Kirchhoff's rules technique. The same approach can be applied to DC-circuits with several batteries, as well as to generic AC-circuits. Such generalizations are discussed in the last section.

\subsection{Formalism and notation}\label{Sec:Formalism}
Consider a circuit containing $n$ nodes, such that nodes $1$ and $n$ are connected to the positive and negative terminals of the battery respectively (Fig. \ref{Fig:GenericCircuit}). 

\begin{figure}[h!] 
\centerline{\includegraphics[width=6cm, keepaspectratio]{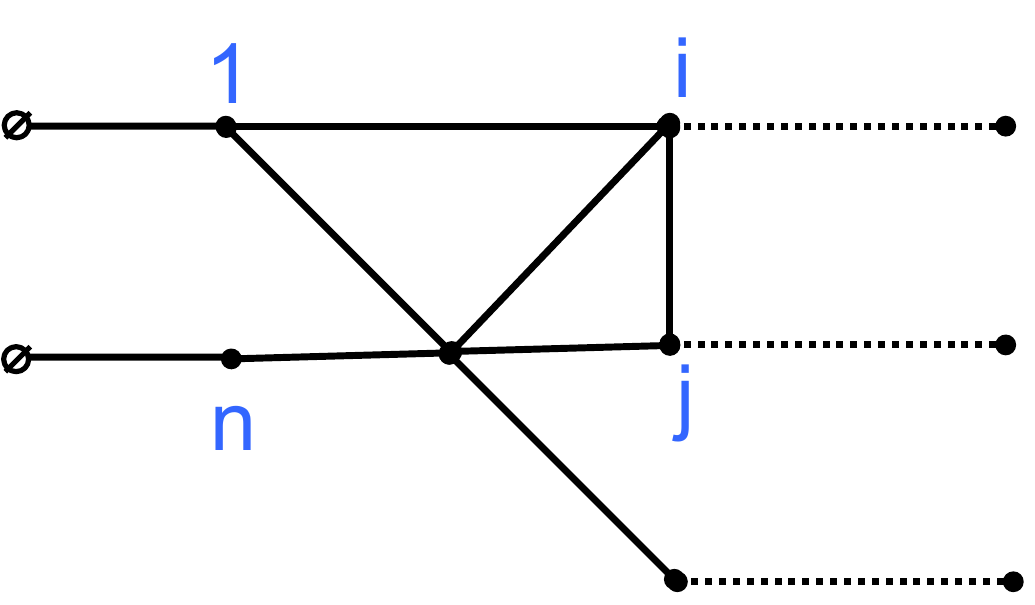}} \caption{A generic circuit with $n$ nodes. The positive (higher potential) battery terminal is connected to node $1$ and delivers the output current to the circuit ($I_{\rm out}$). The current flows back into the other terminal of the battery via node $n$. \label{Fig:GenericCircuit}} 
\end{figure}
Every link connecting two nodes $i$ and $j$ is assumed to have known resistance $R_{ij}\equiv R_{ji}$. In fact from now on, it will be more convenient to use the conductance rather than resistance, defined as 
\be
\sigma_{ij}:=\f{1}{R_{ij}},
\ee
thus giving rise to the conductance matrix $\sigma_{ij}$. It is easy to see that the rules for computing equivalent conductance are reversed compared to those for equivalent resistance: 
\bq
\sigma_{\rm eq} = &\sigma_A + \sigma_B+\ldots, \quad\quad\quad &\mbox {for connection in parallel} \label{sigmaPar}
\\
\sigma_{\rm eq} = &\left(\sigma_A^{-1} + \sigma_B^{-1}+\ldots\right)^{-1}, \quad &\mbox {for connection in series} \label{sigmaSer}
\eq


Without loss of generality, we can make the following assumptions:
\begin{itemize} 
\item $V_1=\E$ and $V_n=0$. Since electric potential is defined up to a constant, we fix one of them to zero. The potential at the positive terminal, $\E$, can be used as a unit.
\item  Every node is connected to every other node. If, in reality, some nodes $i$, $j$ etc. do not share a link, we simply put $\sigma_{i,j}=0$.   
\end{itemize} 
The latter assumption will allow us to not worry about the circuit's topology and concentrate on purely algebraic description. In fact, if we only focus on non-simplifiable circuits, every node, except possibly the $1^{\rm st}$ and $n^{\rm th}$ ones, will have at least three edges. Indeed, a node with only two edges would imply a connection of the edges in series which could be replaced by a single edge. 
Note that each column/row of the conductance matrix  for a non-simplifiable circuit must have at least three non-zero entries. 

We now define the edge current $I_{ij}$ as the current flowing from node $i$ to node $j$. Its expression can be readily written in terms of the nodal potentials as
\be\label{CurrentDef}
I_{ij}:=\sigma_{ij}\left( V_i-V_j\right)\equiv -I_{ji},
\ee
which is manifestly anti-symmetric. In particular, the ``diagonal'' current $I_{ii}=0$. Since electric current flows from a higher potential to a lower one, this definition sets an outgoing current to be positive, whereas an incoming current would be negative.

We conclude this subsection with the following remark. A minimally connected, non-simplifiable circuit has at least $\sim 3n/2$ edges, hence the same number of unknown currents. At the same time, the number of unknown nodal potentials is $(n-2)$, which can be substantially less than the number of currents. Thus the method of nodal potentials deals with fewer variables at the on-set.
\subsection{Implications}
Assuming that the variables describe a real circuit which has specific unique values of the nodal potentials, Kirchhoff's loop rules will be automatically satisfied by construction: any closed loop will come back to the same value of potential, hence making the overall potential difference zero. 

For a generic circuit, Kirchhoff's junction rules take the form
\begin{equation}\label{JunctionRules}
\sum\limits_{j=1}^{n} I_{ij} \equiv \sum\limits_{j=1}^{n} \left(V_i-V_j\right)\sigma_{ij}=0, \quad {\rm for }\, \, i\neq 1 \,\, {\rm or} \,\,n.
\end{equation}
Here, each sum constitutes the total nodal current for $i=2, ..., n-1$. Nodes $1$ and $n$ are not included, as the battery provides non-zero nodal currents at those nodes.  In fact, Eqs. (\ref{JunctionRules}) imply the following 
\begin{lemma}\label{lemma1}
The total current flowing into the $n^{\rm th}$ node is equal to the total current flowing out of the $1^{\rm st}$ node.
\end{lemma}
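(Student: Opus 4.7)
The plan is to exploit the antisymmetry of the current matrix, $I_{ij} = -I_{ji}$, together with the junction rules at the interior nodes $i = 2, \ldots, n-1$. The key observation is that summing $I_{ij}$ over \emph{all} ordered pairs $(i,j)$ with $1 \le i,j \le n$ gives zero, because each pair $\{i,j\}$ contributes $I_{ij} + I_{ji} = 0$ (and the diagonal vanishes). So I would start by writing
\begin{equation}
0 = \sum_{i=1}^{n}\sum_{j=1}^{n} I_{ij}.
\end{equation}

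Next, I would partition the outer sum into the three pieces $i=1$, $2 \le i \le n-1$, and $i=n$. For every $i$ strictly between $1$ and $n$, Eq.~(\ref{JunctionRules}) tells us that the inner sum $\sum_j I_{ij}$ vanishes. The entire middle block therefore drops out, leaving only
\begin{equation}
\sum_{j=1}^{n} I_{1j} + \sum_{j=1}^{n} I_{nj} = 0.
\end{equation}
Rewriting the second sum via antisymmetry as $-\sum_j I_{jn}$, this reads $\sum_j I_{1j} = \sum_j I_{jn}$, which is exactly the claim: the total current out of node $1$ equals the total current into node $n$.

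There is no serious obstacle here; the only minor subtlety is bookkeeping the signs and making sure the diagonal terms $I_{ii}=0$ are consistent with the convention, and the claim that $\sum_{i,j} I_{ij}=0$ is an immediate consequence of the antisymmetry already established in Eq.~(\ref{CurrentDef}). Conceptually, the lemma says that global charge conservation at the battery terminals follows automatically from local charge conservation at every interior junction, a fact that will later justify writing a single unambiguous $I_{\rm out}$ in the definition~(\ref{R_eq}) of equivalent resistance.
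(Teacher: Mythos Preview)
Your proof is correct and follows essentially the same approach as the paper: both use the antisymmetry of $I_{ij}$ to conclude $\sum_{i,j} I_{ij}=0$, split the outer sum into $i=1$, $i=n$, and the interior nodes, and kill the interior block via the junction rules~(\ref{JunctionRules}). Your final rewriting $\sum_j I_{nj} = -\sum_j I_{jn}$ merely makes explicit what the paper phrases as ``the second term is the (negative) input current.''
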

\begin{proof}
Since the edge currents are anti-symmetric, $I_{ij}=-I_{ji}$, the sum over both indices $\sum \limits_{i,j=1}^n I_{ij}=0$. Splitting the summation over $i$ into $i=1$, $i=n$ and the rest yields
\be
\sum \limits_{j=1}^n I_{1,j}+\sum \limits_{j=1}^n I_{n,j}+\sum \limits_{i=2}^{n-1}\left(\sum\limits_{j=1}^{n} I_{ij}\right)=0.
\ee 
By virtue of Eq.(\ref{JunctionRules}), each summand in the last term vanishes. The first and second terms are the output and (negative) input currents respectively, which proves the lemma.
\end{proof}

While the statement of the lemma was not {\rm mathematically} obvious from the construction, it makes clear physical sense: since there is no accumulation of charge in the circuit, the total current coming out of one terminal of the battery has to equal the current flowing back into its other terminal. 

There is another statement which is obvious from the physical point of view, but non-trivial mathematically: \cite{Topology} 
\begin{lemma}\label{lemma2}
The values of the nodal potentials in a resistive circuit connected to a single battery should lie (strictly) between the lower battery voltage and the higher one, i.e., between $0$ and $\E$. In other words, the battery sets the lowest and the highest possible potential in the circuit.
\end{lemma}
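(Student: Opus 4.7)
The plan is to recast Kirchhoff's junction rule at each interior node as a weighted-average identity, then invoke a discrete maximum principle. Rewriting Eq.~(\ref{JunctionRules}) as $V_i\sum_j\sigma_{ij}=\sum_j\sigma_{ij}V_j$ and dividing by $S_i:=\sum_j\sigma_{ij}$, which is strictly positive because every interior node of a non-simplifiable circuit has at least three non-zero incident conductances, I obtain
\be
V_i=\sum_{j\neq i}\f{\sigma_{ij}}{S_i}\,V_j.
\ee
Thus the potential at each interior node is a convex combination of the potentials at the other nodes, with weights proportional to the conductances of the incident edges. This is the discrete analogue of saying that $V$ is a ``harmonic'' function on the interior of the circuit.

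Next, I would let $M:=\max_i V_i$ and argue by contradiction that $M$ cannot be attained at an interior node. Suppose $V_k=M$ for some $k\neq 1,n$. Since $V_k$ equals a convex combination of values $V_j\leq M$, equality is only possible if $V_j=M$ for every $j$ with $\sigma_{kj}>0$, i.e., for every actual physical neighbor of $k$. Iterating this observation along edges of non-zero conductance and using the connectedness of the circuit, the value $M$ propagates to every node, including node $n$. But $V_n=0$ while $V_1=\E>0$, so $M\geq\E>0$, contradicting $M=V_n=0$. Therefore $M$ is attained only at a battery node, and since $\E>0$ it must be at node $1$, giving the strict inequality $V_i<\E$ at every interior node. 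An identical argument applied to $\min_i V_i$ (or, equivalently, to $\E-V_i$, which satisfies the same averaging identity with reversed boundary values) shows that the minimum is attained only at node $n$, so $V_i>0$ strictly for every interior $i$.

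The main subtlety I anticipate is the propagation step: it requires that the subgraph of edges with positive conductance be connected, so that equality at one interior node really does spread to the battery nodes. Connectedness is implicit in the circuit having a well-defined equivalent resistance---an isolated subcircuit would carry no current and its nodal potentials would be undetermined---but it is worth flagging, since Section~\ref{Sec:Formalism} allows the convenient ``every-pair-connected'' bookkeeping in which many $\sigma_{ij}$ vanish. Once connectedness is invoked, the rest of the argument is the standard maximum-principle bookkeeping.
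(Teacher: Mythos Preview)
Your argument is correct and is essentially the same discrete maximum-principle proof the paper gives, recast as a convex-combination identity rather than in the language of outgoing currents. Your explicit propagation step and connectedness caveat handle the tied-maximum case more carefully than the paper, which tacitly assumes the neighboring potentials are strictly lower.
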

\begin{proof}
Suppose that the maximum potential is attained at node $m\neq 1$. Then all the neighboring nodes would have a lower potential resulting in currents from node $m$
\be
I_{mj}=\sigma_{mj}\left(V_m-V_j\right)>0
\ee
to be outgoing. The latter would violate the junction rule (\ref{JunctionRules}). Thus the maximum potential can't be attained at any node other than the $1^{\rm st}$ one. Similarly, the minimum potential is attained at node $n$. Hence all the nodal potential values lie between $V_1$ and $V_n$, which proves the lemma.
\end{proof}

In the next session we illustrate how the method of nodal potentials helps to determine the equivalent conductance of the Wheatstone bridge circuit with arbitrary edge conductances.
\subsection{An example: Wheatstone bridge circuit}\label{Sec:Wheatstone}
We now assume that the values of edge resistance in Fig.\ref{Fig:Wheatstone} are given and find the equivalent resistance (conductance) of the circuit. Notice that if one was to solve this problem using Kirchhoff's rules, one would have to deal with six unknown currents: one in each resistor plus the output current. At the same time, there are only two unknown nodal potentials: $V_2$ and $V_3$, since $V_1=\E$ and $V_4=0$. For convenience, we also label the edge conductances similarly to the original resistances:  
\be
\sigma_{12}\equiv \sigma_1 = \frac{1}{R_1}, \quad \sigma_{24}\equiv \sigma_2 = \frac{1}{R_2}, \quad \sigma_{23}\equiv \sigma_3 = \frac{1}{R_3}, \quad \sigma_{13}\equiv \sigma_4 = \frac{1}{R_4}, \quad \sigma_{34}\equiv \sigma_5 = \frac{1}{R_5}.
\ee

It is easy to see that there are exactly two junction equations for this circuit, one for node 2 and one for node 3:

\bq
I_{21}+I_{24}+I_{23}=\sigma_1\left(V_2-V_1\right)+\sigma_2\left(V_2-V_4\right)+\sigma_3\left(V_2-V_3\right)=0 \label{V2}\\
I_{31}+I_{34}+I_{32}=\sigma_4\left(V_3-V_1\right)+\sigma_5\left(V_3-V_4\right)+\sigma_3\left(V_3-V_2\right)=0\label{V3}
\eq

Setting $V_1=\E$ and $V_4=0$ and solving the equations on the righthand side for the unknown potentials $V_2$ and $V_3$, we obtain:

\be\label{V2V3}
V_2 =\E \frac{\sigma_1 c_3+\sigma_3\sigma_4}{c_2c_3-\sigma_3^2}, \quad V_3 =\E \frac{\sigma_4 c_2+\sigma_1\sigma_3}{c_2c_3-\sigma_3^2},
\ee
where $c_2\equiv \sigma_1+\sigma_2+\sigma_3$ and $c_3\equiv \sigma_3+\sigma_4+\sigma_5$. 

It is important to understand that for a connected circuit ($0<\sigma_{ij}<\infty$) the denominator in (\ref{V2V3}) can never be zero. To make it more transparent, we rewrite the denominator as 
\be
\D={\sigma_{3}(\sigma_1+\sigma_2+\sigma_{4}+\sigma_5)+(\sigma_1+\sigma_2)(\sigma_4+\sigma_5)}.
\ee
Since each term in $\D$ is non-negative, it can only be zero if each term is zero. Irrespective of whether $\sigma_3=0$, the latter implies that $\sigma_1=\sigma_2=\sigma_4=\sigma_5=0$. This corresponds to a disconnected circuit, such that nodes 2 and 3 are completely isolated, which makes their potentials undetermined. \cite{Connectivity}

In order to find the equivalent conductance, it is easiest to consider the current flowing into node 4, $I_{24}+I_{34}=\sigma_2\left(V_2-V_4\right)+\sigma_5\left(V_3-V_4\right)$. Setting $V_4=0$ and using the nodal potentials in (\ref{V2V3}), we obtain
\be\label{CeqWheatstone}
\sigma_{\rm eq}=
\frac{\sigma_2 V_2+\sigma_5 V_3}{\E}=\frac{\sigma_1 \sigma_2c_3+\sigma_2\sigma_3\sigma_4+\sigma_1\sigma_3\sigma_5+\sigma_4\sigma_5  c_2}{\sigma_{3}(\sigma_1+\sigma_2+\sigma_{4}+\sigma_5)+(\sigma_1+\sigma_2)(\sigma_4+\sigma_5)}.
\ee
Looking at the answer, we see that it is a ratio of two polynomials of degree $(n-1)$ and $(n-2)$ respectively. This clearly guarantees the correct units of conductance. Moreover, each polynomial is a sum of non-negative terms. As explained in Ref. \cite{Connectivity} , this means that (\ref{V2V3}) and (\ref{CeqWheatstone}) are neither zero nor infinity for any connected circuit. 

Finally, the Wheatstone bridge has a well known feature that for a special arrangement of edge resistances (conductances), there is no current through the middle wire (labeled with $\sigma_3$). We can arrive this condition by setting equal potentials at the ends of the wire $V_2=V_3$. This yields  $\sigma_1 \sigma_5= \sigma_2 \sigma_4$ or the standard
\begin{equation}\label{WheatSym}
\f{R_1}{R_2}=\f{R_4}{R_5}.
\end{equation}
In the next section we shall generalize the expression for the nodal potentials (\ref{V2V3}) and derive the equivalent conductance for an arbitrary circuit.

\section{Method of nodal potentials in a generic circuit}
We now revisit the generic circuit displayed in Fig. \ref{Fig:GenericCircuit} with the same assumptions as in Sec. \ref{Sec:Formalism}: the battery terminals read $V_1=\E$ and $V_n=0$ and all the edge conductances $\sigma_{ij}$ are given.
\subsection{Expressions for the nodal potentials}
 We can write the junction equations, analogous to (\ref{V2}) and (\ref{V3}), for all the nodal potentials, including $V_1$ and $V_n$. Collecting similar terms, we cast those equations in the following matrix form
\be\label{SigmaMatrix}
\begin{pmatrix}
c_1 & -\sigma_{12} & -\sigma_{13} & \dots & -\sigma_{1, n-1} &-\sigma_{1,n}\\
-\sigma_{21} & c_{2} & -\sigma_{23} & \dots & -\sigma_{2, n-1} &-\sigma_{2,n}\\
-\sigma_{31} & -\sigma_{32} & c_{3} & \dots & -\sigma_{3, n-1} &-\sigma_{3,n}\\
\vdots & \vdots & \vdots &\ddots &\vdots & \vdots\\ 
-\sigma_{n-1,1} & -\sigma_{n-1,2} & -\sigma_{n-1,3} & \dots & c_{n-1} &-\sigma_{n-1,n} \\
-\sigma_{n,1} & -\sigma_{n,2} & -\sigma_{n,3} & \dots & -\sigma_{n,n-1} & c_n 
\end{pmatrix}
\begin{pmatrix}
{V_1} \\
{\bf V_2}\\
{\bf V_3}\\
\vdots\\
{\bf V_{n-1}}\\
{V_n}
\end{pmatrix}
=
\begin{pmatrix}
{\bf I_{\rm out}} \\
0\\
0\\
\vdots\\
0\\
{\bf I_{\rm in}}
\end{pmatrix},
\ee
where the unknown quantities are highlighted in boldface, and the diagonal elements $c_{i}=\sum\limits_{j=1}^n \sigma_{ij}$. We denote this matrix $\Sigma$, \cite{SigmaMatrix} and will also need $\Sigma^\prime$, its upper-left sub-matrix $(n-1)\times (n-1)$
\be
\Sigma^\prime=\begin{pmatrix}
c_1 & -\sigma_{12} & -\sigma_{13} & \dots & -\sigma_{1, n-1} \\
-\sigma_{21} & c_{2} & -\sigma_{23} & \dots & -\sigma_{2, n-1} \\
-\sigma_{31} & -\sigma_{32} & c_{3} & \dots & -\sigma_{3, n-1} \\
\vdots & \vdots & \vdots &\ddots &\vdots \\ 
-\sigma_{n-1,1} & -\sigma_{n-1,2} & -\sigma_{n-1,3} & \dots & c_{n-1} \\
\end{pmatrix} \quad {\rm and} \quad
\label{SigmaPP}
\Sigma^{\prime\prime}=\begin{pmatrix}

 c_{2} & -\sigma_{23} & \dots & -\sigma_{2, n-1} \\
 -\sigma_{32} & c_{3} & \dots & -\sigma_{3, n-1} \\
 \vdots & \vdots &\ddots &\vdots \\ 
 -\sigma_{n-1,2} & -\sigma_{n-1,3} & \dots & c_{n-1} \\
\end{pmatrix},
\ee 
the lower-right sub-matrix of $\Sigma^\prime$ of size $(n-2)\times(n-2)$. Note that the sum of all the elements of $\Sigma$ is zero, which implies that the $n$ equations in (\ref{SigmaMatrix}) are not independent. This is clearly consistent with the fact that $I_{\rm in}=-I_{\rm out}$. Hence from now on we skip the $n^{\rm th}$ equation. 

In order to solve for the unknown nodal potentials $V_2$, $V_3$, $...$, $V_{n-1}$, we rearrange the equations in (\ref{SigmaMatrix}) as follows. The first element of each row of $\Sigma$ multiplies $V_1=\E$. We carry this term to the righhand side of each equation in (\ref{SigmaMatrix}). In the first row, we also carry the unknown output current $I_{\rm out}$ to the lefthand side. Now these equations take the form 
\be\label{SigmaRear}
\begin{pmatrix}
-1 & -\sigma_{12} & -\sigma_{13} & \dots & -\sigma_{1, n-1} \\
0 & c_{2} & -\sigma_{23} & \dots & -\sigma_{2, n-1} \\
0 & -\sigma_{32} & c_{3} & \dots & -\sigma_{3, n-1} \\
\vdots & \vdots & \vdots &\ddots &\vdots \\ 
0 & -\sigma_{n-1, 2} & -\sigma_{n-1, 3} & \dots & c_{n-1} 
\end{pmatrix}
\begin{pmatrix}
I_{\rm out} \\
V_2\\
V_3\\
\vdots\\
V_{n-1}
\end{pmatrix}=
\E\begin{pmatrix}
-c_1\\
\sigma_{21}\\
\sigma_{31}\\
\vdots\\
\sigma_{n-1, 1}
\end{pmatrix}.
\ee
Applying Cramer's rule to rows $2$ through $(n-1)$, we obtain the following expressions for the nodal potentials
\be\label{NPexpressions}
V_k=\E\frac{\det \Sigma^{\prime\prime}_k}{\det \Sigma^{\prime\prime}}.
\ee
Here $\Sigma^{\prime\prime}_k$ is the matrix obtained from $ \Sigma^{\prime\prime}$ by substituting its $k^{\rm th}$ column by $(\sigma_{21}, \sigma_{31}, ..., \sigma_{n-1,1})^{\rm T}$. On physical grounds, for any connected circuit the determinant in the denominator should be non-zero. This, however, is not  so obvious from the mathematical point of view. See Appendix \ref{App:Determinant} for more detail.

\subsection{Derivation of the equivalent conductance of a generic circuit}\label{Sec:SigmaEq}

In principle, one can find the equivalent conductance of the circuit in Fig.\ref{Fig:GenericCircuit} using the same method as in Sec. \ref{Sec:Wheatstone}. Specifically, we could use the nodal potentials found in (\ref{NPexpressions}) to compute the output current, using (\ref{JunctionRules}) for $n=1$, and substitute in
\be
\sigma_{\rm eq}=\frac{I_{\rm out}}{\E}.
\ee
There is, however, a more economical way to arrive at the expression for the equivalent conductance. Interestingly, it can be obtained in a closed form. Moreover the expression of $\sigma_{\rm eq}$ in terms of the individual conductances (which are assumed given) is universal and does not require prior finding nodal potentials. Applying Cramer's rule to the first row of (\ref{SigmaRear}), we can find $I_{\rm out}$
\be
I_{\rm out}=\E\frac{
\begin{vmatrix}
-c_1 & -\sigma_{12} & -\sigma_{13} & \dots & -\sigma_{1, n-1} \\
\sigma_{21} & c_{2} & -\sigma_{23} & \dots & -\sigma_{2, n-1} \\
\sigma_{31} & -\sigma_{32} & c_{3} & \dots & -\sigma_{3, n-1} \\
\vdots & \vdots & \vdots &\ddots &\vdots \\ 
\sigma_{n-1, 1} & -\sigma_{n-1, 2} & -\sigma_{n-1, 3} & \dots & c_{n-1} 
\end{vmatrix}}
{\begin{vmatrix}
-1 & -\sigma_{12} & -\sigma_{13} & \dots & -\sigma_{1, n-1} \\
0 & c_{2} & -\sigma_{23} & \dots & -\sigma_{2, n-1} \\
0 & -\sigma_{32} & c_{3} & \dots & -\sigma_{3, n-1} \\
\vdots & \vdots & \vdots &\ddots &\vdots \\ 
0 & -\sigma_{n-1, 2} & -\sigma_{n-1, 3} & \dots & c_{n-1} 
\end{vmatrix}}=\E\frac{-\det \Sigma^\prime}{-\det \Sigma^{\prime\prime}}.
\ee
Therefore, the equivalent conductance reads
\be\label{Ceq}
\sigma_{\rm eq}=\frac{\det \Sigma^\prime}{\det \Sigma^{\prime\prime}}.
\ee
As before, the answer is a ratio of two polynomials of degree $(n-1)$ and $(n-2)$, which clearly has correct units. Each determinant is of the form similar to that of Sec. \ref{Sec:Wheatstone}. They both are non-zero (positive) for any connected circuit. See Appendix \ref{App:Determinant} for more detail. Eq.(\ref{Ceq}) constitutes the main result of this paper. In the following section we consider a few specific electrical circuits and use (\ref{Ceq}) to compute their equivalent conductance. We discuss general  properties  of (\ref{Ceq}) in the last section.

\section{Examples}
Even though the application of (\ref{Ceq}) is straightforward for an arbitrary electrical circuit, it is still interesting to consider a few examples and make sure that the formula work correctly. In this section, we analyze relatively simple circuits. More general properties of (\ref{Ceq}) are discussed in the Appendices. 
\subsection{Wheatstone bridge}
Now, as we have the closed formula (\ref{Ceq}), we can compute the equivalent conductance of the Wheatstone bridge directly. Using the conductance labels of Sec. \ref{Sec:Wheatstone}, the corresponding matrix is 
\be
\Sigma=\begin{pmatrix}
c_1 & -\sigma_{1} & -\sigma_{4} & 0  \\
-\sigma_{1} & c_{2} & -\sigma_{3} & -\sigma_{2}  \\
-\sigma_{4} & -\sigma_{3} & c_{3} & -\sigma_{5}  \\
0&-\sigma_{2} & -\sigma_{5} & c_{4}  
\end{pmatrix},
\ee
where $c_1=\sigma_1+\sigma_4$, $c_2=\sigma_1+\sigma_2+\sigma_3$, $c_3=\sigma_3+\sigma_4+\sigma_5$, and $c_4=\sigma_2+\sigma_5$. Applying (\ref{Ceq}), we obtain
\be
\sigma_{\rm eq}=\frac{
\begin{vmatrix}
c_1 & -\sigma_{1} & -\sigma_{4}  \\
-\sigma_{1} & c_{2} & -\sigma_{3}   \\
-\sigma_{4} & -\sigma_{3} & c_{3}     
\end{vmatrix}}
{\begin{vmatrix}
 c_{2} & -\sigma_{3}   \\
 -\sigma_{3} & c_{3}     
\end{vmatrix}
}=\frac{\sigma_1 \sigma_2 c_{3}+\sigma_2\sigma_3\sigma_4+\sigma_1\sigma_3\sigma_5+\sigma_4\sigma_5  c_{2}}{\sigma_{3}(\sigma_1+\sigma_2+\sigma_{4}+\sigma_5)+(\sigma_1+\sigma_2)(\sigma_4+\sigma_5)},
\ee
which agrees with (\ref{CeqWheatstone}).
\subsection{A three-node circuit}
\begin{figure}
        \centering
        \begin{subfigure}[b]{0.4\textwidth}
                \centering
                \frame{\includegraphics[width=\textwidth]{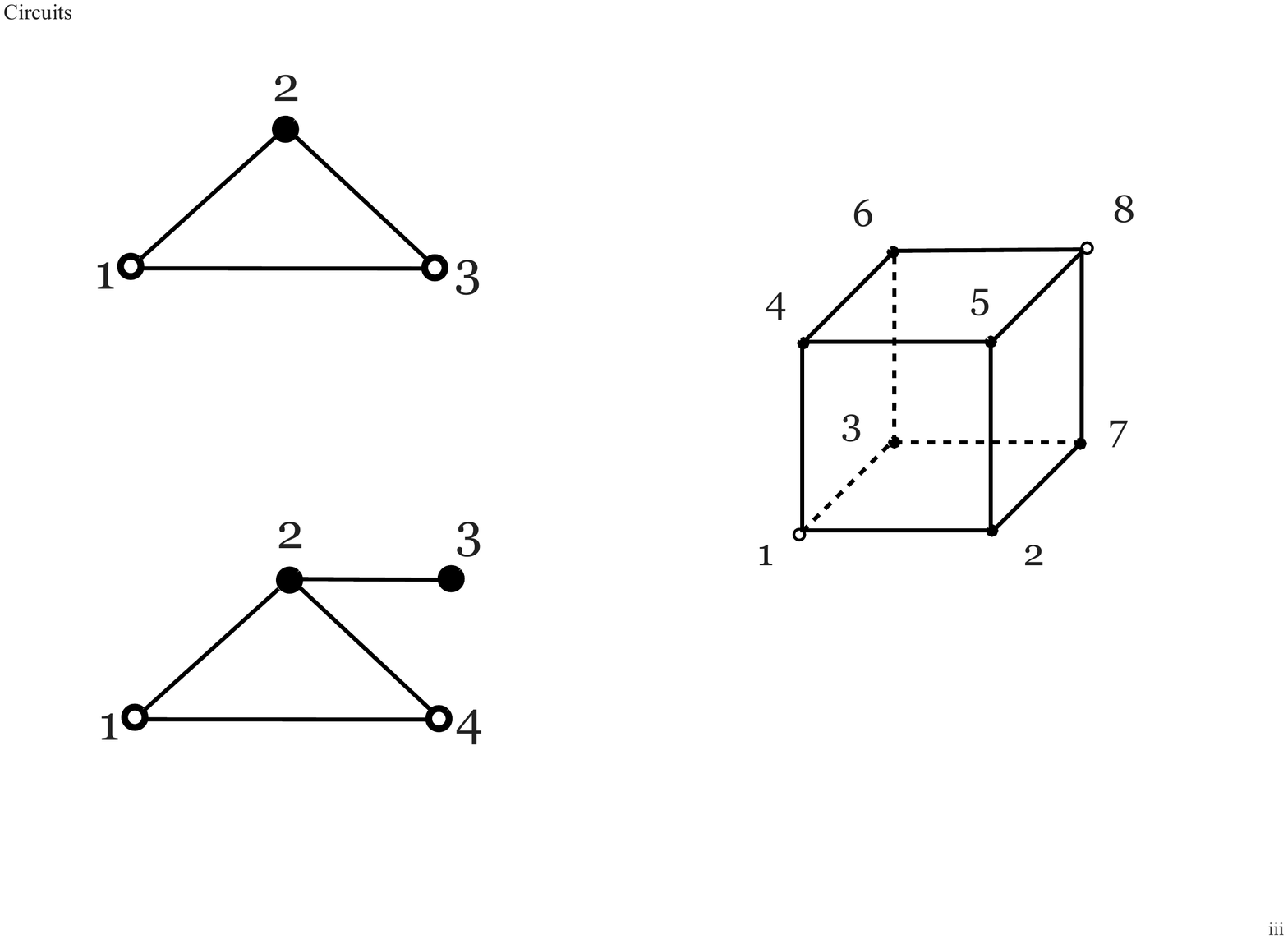}}
                \caption{A three-node circuit.}
                \label{Fig:Triangle}
        \end{subfigure}%
		\qquad
        ~ 
        \begin{subfigure}[b]{0.4\textwidth}
                \centering
                \frame{\includegraphics[width=\textwidth]{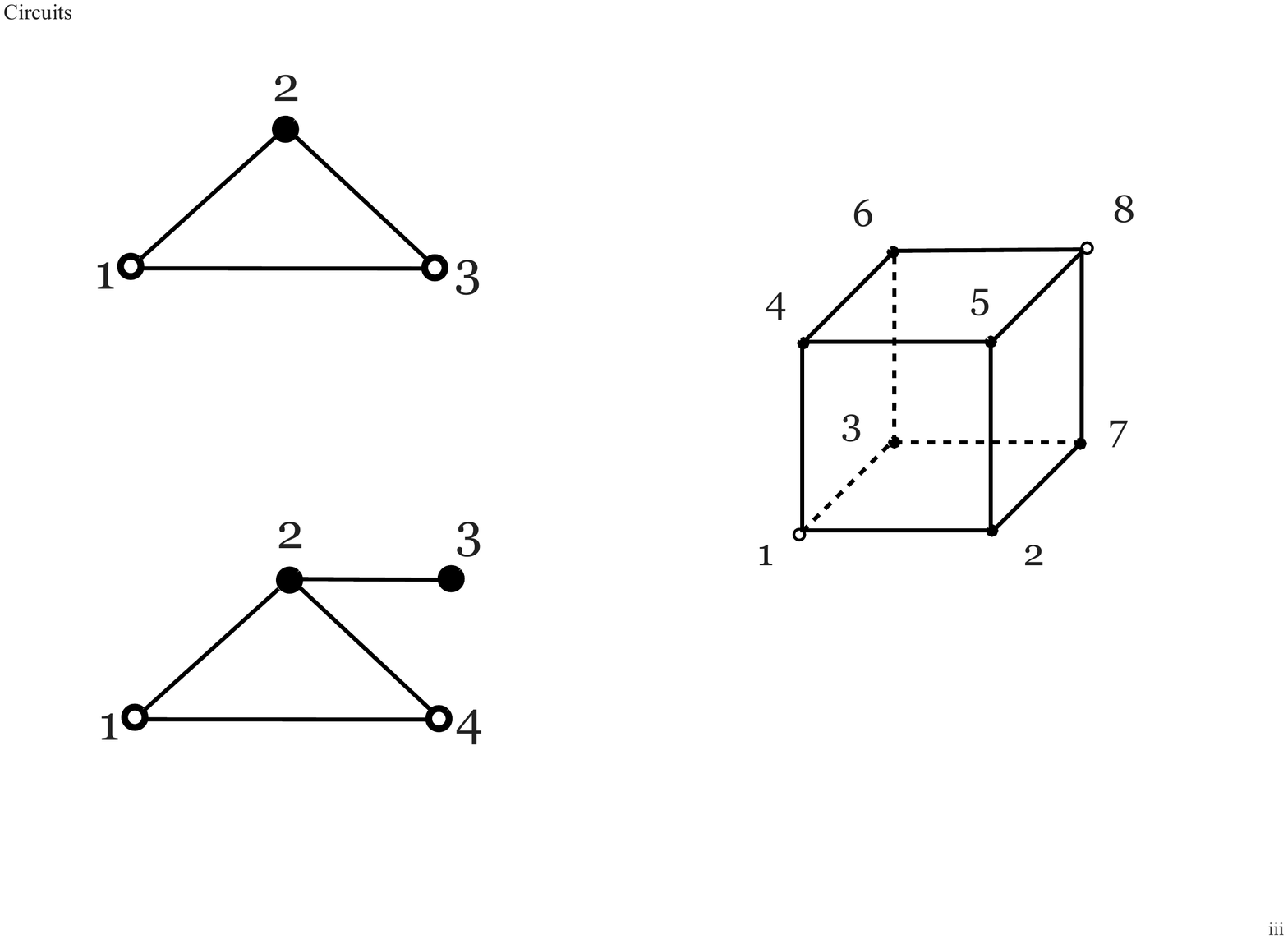}}
                \caption{Edge 2-3 doesn't carry current}
                \label{Fig:Triangle+1}
        \end{subfigure}
        \caption{Simple circuits. The first and last nodes are connected to the battery.}\label{Fig:Speed_Spin}
\end{figure}

In Appendix \ref{App:Series}, we prove that formula (\ref{Ceq}) yields the right result when one combines two resistors in series and rewrites the $\Sigma$-matrix accordingly.  For the three-node circuit in Fig.~\ref{Fig:Triangle}, however, we shall obtain the equivalent conductance directly. Before we do so, note that - based on Eqs. (\ref{sigmaSer}) and (\ref{sigmaPar}) - the equivalent conductance is 
\be\label{Triangle}
\sigma_{\rm eq} = \sigma_{13}+\frac{\sigma_{12} \sigma_{23}}{\sigma_{12}+ \sigma_{23}}
\ee
The $\Sigma$-matrix for this circuit is 
\be
\Sigma = \begin{vmatrix}
c_1 & -\sigma_{12} & -\sigma_{13}  \\
-\sigma_{21} & c_{2} & -\sigma_{23}   \\
-\sigma_{31} & -\sigma_{32} & c_{3}     
\end{vmatrix}
\ee
Setting the symmetric entries equal and using (\ref{Ceq}), we get
\be
\sigma_{\rm eq}=\frac{
\begin{vmatrix}
c_1 & -\sigma_{12} \\
-\sigma_{21} & c_{2} 
\end{vmatrix}
}{c_2}=\frac{\sigma_{12}\sigma_{23}+\sigma_{23}\sigma_{13}+\sigma_{12}\sigma_{13}}{\sigma_{12}+\sigma_{23}},
\ee
which is equivalent to (\ref{Triangle}).


\subsection{Edges without current}
In the four-node circuit of Fig. \ref{Fig:Triangle}, the current between nodes 2 and 3 is zero. Thus this circuit is electrically equivalent to the previous one. In particular, the equivalent conductance should not depend on $\sigma_{23}$. It is insightful, however, to see explicitly how $\sigma_{23}$ drops out of the final expression. The $\Sigma$-matrix reads
\be
\Sigma=
\begin{pmatrix}
c_1 & -\sigma_{12} & 0 & -\sigma_{14} \\
-\sigma_{12} & c_{2} & -\sigma_{23} & -\sigma_{24}  \\
0 & -\sigma_{23} & c_{3} & 0  \\
-\sigma_{14}&-\sigma_{24} & 0 & c_{4} . 
\end{pmatrix}
\ee
Here we have already  set the symmetric entries equal. Note that $c_3=\sigma_{23}$. When computing the relevant determinants, we can replace the second row with the sum of rows 2 and 3, and then do the same with columns 2 and 3. After these manipulations, the numerator of (\ref{Ceq}) reads
\be
\det \Sigma^\prime = \begin{vmatrix}
c_1 & -\sigma_{12} & 0 \\
-\sigma_{12} & \sigma_{12}+\sigma_{24} & 0 \\
0 & 0 & \sigma_{23}  
\end{vmatrix}= \sigma_{23}\begin{vmatrix}
c_1 & -\sigma_{12} \\
-\sigma_{12} & \sigma_{12}+\sigma_{24}
\end{vmatrix},
\ee
and $\sigma_{23}$ can be factored out. Repeating the same steps for $\det \Sigma^{\prime\prime}$, we can see that the denominator of (\ref{Ceq}) is also proportional to $\sigma_{23}$. Therefore the equivalent conductance does not depend on this edge conductance. Clearly, the same argument works for any circuit containing an edge that only has one connected node. 
\subsection{The classic resistor cube problem}
The edges of the cube in Fig. \ref{Fig:Cube} are made of 1-$\Omega$ resistors ($\sigma = 1 \,\Omega^{-1}$). We need to find the equivalent resistance of this circuit if the battery terminals are connected across the main diagonal (nodes 1 and 8).
\begin{figure}[h!] 
\centerline{\includegraphics[width=6cm, keepaspectratio]{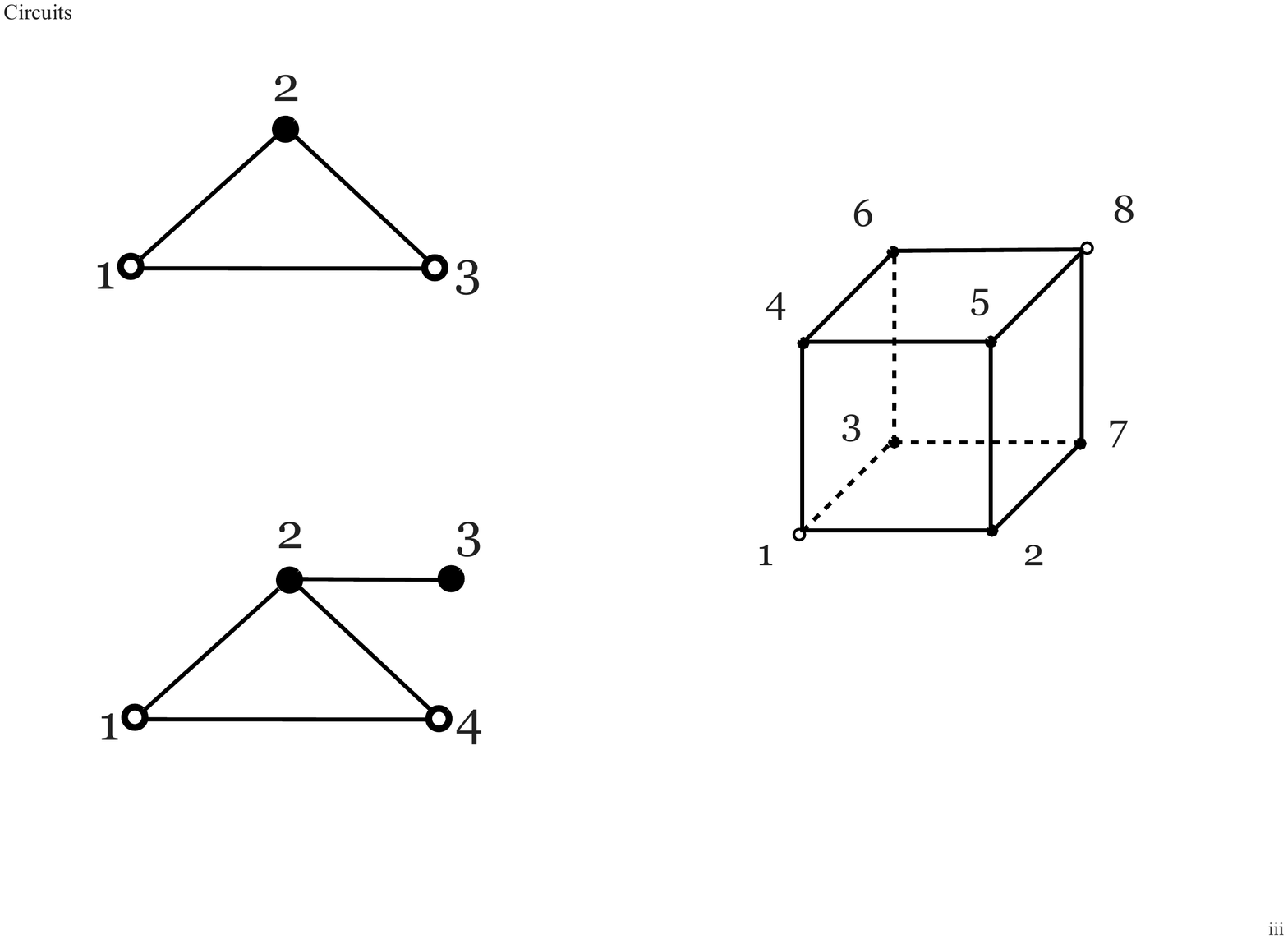}} 
\caption{The classic resistor cube problem. Nodes 1 and 8 are connected to the battery.\label{Fig:Cube}} 
\end{figure}

The $\Sigma$-matrix for this circuit is 
\be\label{Cube}
\Sigma=\begin{pmatrix}
3 & -1 & -1 & -1 & 0 & 0 & 0 & 0\\
-1 & 3 & 0 & 0 & -1 & 0 & -1 & 0\\
-1 & 0 & 3 & 0 & 0 & -1 & -1 & 0\\
-1 & 0 & 0 & 3 & -1 & -1 & 0 & 0\\
0 & -1 & 0 & -1 & 3 & 0 & 0 & -1\\
0 & 0 & -1 & -1 & 0 & 3 & 0 & -1\\
0 & -1 & -1 & 0 & 0 & 0 & 3& -1\\
0 & 0 & 0 & 0 & -1& -1 & -1 & 3\\
\end{pmatrix}
\ee
Computing the relevant determinants, we obtain the equivalent conductance
\be
\sigma_{\rm eq} = \frac{\det \Sigma^\prime}{\det \Sigma^{\prime\prime}}=\frac{384}{320}=\frac{6}{5} \,\,(\Omega^{-1}),
\ee
or $R_{\rm eq}=5/6 \,\,\Omega$, which is the well-known result. Note that while symmetry is crucial for the classical solution of this problem, Eq. (\ref{Ceq}) would still apply for arbitrary edge conductances of the cube.
\subsection{Complete graph}
Consider a circuit with $N$ nodes, such that every node is connected to every other node by an edge of fixed resistance, say 1 $\Omega$ ($\sigma_{ij}=1 \Omega^{-1}$). One method of calculating the equivalent resistance of this circuit is presented in; \cite{Wu} and the answer is $R_{\rm eq}= 2/N$ ($\Omega$). Before applying (\ref{Ceq}), we note that there is a simple physical solution. Assuming that the battery is connected to nodes $1$ and $N$, all the other $(N-2)$ nodes are equivalent, hence their nodal potentials are equal: $V_2=V_3=...=V_{N-1}$. We can thus merge the latter nodes into one ($M$) and obtain a very simple circuit: $(N-2)$ parallel edges from $1$ to $M$, $(N-2)$ parallel edges from $M$ to $N$ and one direct edge from $1$ to $N$. It then straightforward to obtain the above expression for $R_{\rm eq}$.

Let us now consider the relevant determinants of the $\Sigma$-matrix.

\be\label{AllCon}
\det \Sigma^\prime=\begin{vmatrix}
N-1 & -1 & -1&\dots & -1 \\
-1 & N-1 & -1 & \dots & -1\\
-1 & -1 & N-1 & \dots & -1\\
\vdots & \vdots & \vdots & \ddots& \vdots\\
-1 & -1 & -1 & \dots & N-1\\
\end{vmatrix},
\ee
with $(N-1)$ elements in each row and column. Expanding this determinant by the elements of first row, we obtain one term proportional to $\det \Sigma^{\prime\prime}$ and $(N-2)$ other terms, which can be shown to be identical (denote  $\Delta$). In other words,
\be\label{AllCon1}
\det\Sigma^{\prime} =(N-1) \det \Sigma^{\prime\prime}+(N-2) \Delta,
\ee
where
\be
\det\Sigma^{\prime\prime}=\begin{vmatrix}
N-1 & -1 & \dots & -1 \\
-1 & N-1  & \dots & -1\\
\vdots  & \vdots & \ddots& \vdots\\
-1 & -1  & \dots & N-1\\
\end{vmatrix},\quad
\Delta=\begin{vmatrix}
-1 & -1 & \dots & -1 \\
-1 & N-1  & \dots & -1\\
\vdots  & \vdots & \ddots& \vdots\\
-1 & -1  & \dots & N-1\\
\end{vmatrix},
\ee
with $(N-2)$ elements in each row and column. At the same time, the determinant in Eq.(\ref{AllCon}) can be computed by subtracting the second row from the first one
\be\label{AllCon2}
\det \Sigma^\prime=\begin{vmatrix}
N & -N & 0 &\dots & 0 \\
-1 & N-1 & -1 & \dots & -1\\
-1 & -1 & N-1 & \dots & -1\\
\vdots & \vdots & \vdots & \ddots& \vdots\\
-1 & -1 & -1 & \dots & N-1\\
\end{vmatrix}=N(\det\Sigma^{\prime\prime}+\Delta),
\ee
where we expanded the determinant by the elements of the first row. Setting the righthand sides of Eqs. (\ref{AllCon1}) and (\ref{AllCon2}) equal, we can express $\Delta$ as 
\be
\Delta = - \frac{1}{2}\det\Sigma^{\prime\prime}.
\ee
Substituting this result back into (\ref{AllCon2}) yields $\det\Sigma^{\prime}=\frac{N}{2}\det\Sigma^{\prime\prime}$. Finally, applying (\ref{Ceq}), we obtain
\be
\sigma_{\rm eq} = \frac{\det\Sigma^\prime}{\det\Sigma^{\prime\prime}} = \frac{N}{2} \,\,(\Omega^{-1}),
\ee
which agrees with the equivalent resistance $R_{\rm eq}= 2/N$ ($\Omega$).
\section{Discussion}\label{Sec:Discussion}
We have derived a closed formula for the equivalent conductance of an arbitrary circuit. All one needs to know is the edge conductances $\sigma_{ij}$ which give rise to the $\Sigma$-matrix (weighted Laplacian matrix) defined in (\ref{SigmaMatrix}). The equivalent conductance can then be computed as a ratio of two sub-determinants of $\Sigma$ via Eq. (\ref{Ceq}). Interestingly, both $\det \Sigma^\prime$ and $\det \Sigma^{\prime\prime}$ are {\em linear} functions of each individual edge resistance (see Appendix \ref{SigmaFun}).

One important feature of  (\ref{Ceq}) is that it respects the permutation symmetry. Indeed, relabeling any nodes other  than the two connected to the battery terminals (1 and $n$) must not affect the equivalent conductance of the circuit. For example, switching labels $i$ and $j$ would merely result in a minus sign in front of $\det \Sigma^\prime$ and $\det \Sigma^{\prime\prime}$ thus keeping the answer intact.

On physical grounds, the equivalent conductance of an arbitrary circuit (with $\sigma_{ij}<\infty$) can be zero (for a disconnected circuit), but never infinite. One the other hand, the denominator in Eq. (\ref{Ceq}) can vanish if sufficiently many $\sigma_{ij}$ equal zero. This implies that if $\det \Sigma^{\prime\prime}=0$, the other determinant, $\det \Sigma^{\prime}$, must vanish as well. Physically, this is the case when the circuit contains isolated clusters of nodes. Removing such clusters (and reducing the size of the $\Sigma$-matrix) renders the determinants non-zero.

In this paper we focused only on resistive circuits with a single battery. However, the same analysis can be applied to circuits with multiple batteries. This can be done by simply incorporating the additional EMF's into the nodal potential differences. Clearly, ideal batteries would not affect the equivalent resistance between fixed nodes. 

Generalization to capacitive circuits is also straightforward, as equivalent capacitance obeys the same rules (\ref{sigmaPar}) and (\ref{sigmaSer}) as conductance. Thus the final formula (\ref{Ceq}) can be understood in terms of capacitance as well.

In addition, Eq.(\ref{Ceq}) will work for the equivalent impedance (admittance) of an AC-circuit. The only difference would be that admittance is a complex number and either $\det \Sigma^{\prime}$ or $\det \Sigma^{\prime\prime}$ can be zero even for a connected AC-circuit. In fact, setting the determinants to zero one can determine the resonance frequencies of the circuit.

Interestingly, Eq. (\ref{Ceq}) unveils a curious interplay between electrical circuits, matrix algebra, graph theory and its applications to computer science. Specifically, there is a straightforward correspondence between electrical circuits and random walks on graphs, \cite{Doyle} including the concept of {\em escape probability}, which is a direct analog of equivalent resistance. In addition, Eq. (\ref{Ceq}) can help to investigate the connectivity of generic graphs, which is done in, e.g., \cite{Mohar}  using spectral analysis. These interdisciplinary connections are particularly useful, as there is much physical intuition about electrical circuits that could give rise to some less obvious mathematical statements. 

To conclude we note the following. While the method of nodal potentials should probably have place in an E\&M curriculum, \cite{Potential} I would rather not recommend handing formula (\ref{Ceq}) out to students; that is unless the students derive it themselves! Giving the answer away could ruin the excitement of puzzle-solving that students usually associate with circuit analysis. On the other hand, this closed expression can help {\em physics instructors} when creating multiple versions of circuits problems. For they would easily be able to randomize the edge resistances for many students without any worries about checking their students' answers.

\section*{Acknowledgements}
The author would like to thank Dan Boykis, Ken Johnson, Greg Kagan, Patrick Moylan, Stan Ritvin and Artur Tsobanjan for helpful comments and discussions and Bojan Mohar for pointing out the existing graph theoretical results. The author is also grateful to the editors and anonymous reviewers for the suggestions that led to substantial improvements of the paper.

\begin{appendix}
\section{Properties of the determinants}\label{App:Determinant}
\subsection{Positivity of $\det \Sigma^\prime$ and $\det \Sigma^{\prime\prime}$}
In general, determinants contain both positive and negative monomials. However, in Sec. \ref{Sec:Wheatstone} we saw that, after some cancellations, the determinants in the numerator and denominator of (\ref{CeqWheatstone}) had only positive terms. In this appendix, we shall prove that the determinant of $\Sigma^\prime$ is always of this form, that is 
\begin{lemma}\label{lemma3}
Each term in $\det \Sigma^{\prime}$ enters with a plus.
\end{lemma}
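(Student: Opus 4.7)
My plan is to prove the lemma by recognizing $\det \Sigma'$ as a manifestly non-negative combinatorial quantity—specifically, the weighted sum over all spanning trees of the underlying circuit graph, with each summand equal to the product of edge conductances along that tree. This is the content of Kirchhoff's Matrix-Tree Theorem applied to the weighted Laplacian $\Sigma$; since each $\sigma_{ij} \geq 0$, every summand is non-negative, which proves the lemma.

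To keep the argument at the level of the rest of the paper, I would derive this identity directly by multilinear expansion. Writing each diagonal entry as $c_i = \sum_{j \neq i}\sigma_{ij}$ and applying multilinearity of the determinant in each row yields
\[
\det \Sigma' \;=\; \sum_{(j_1, \ldots, j_{n-1})} \det E_{j_1, \ldots, j_{n-1}},
\]
where each $E_{j_1,\ldots,j_{n-1}}$ is a sparse $(n-1)\times(n-1)$ matrix whose $i$-th row has $+\sigma_{i, j_i}$ in column $i$ and, provided $j_i \leq n-1$, the entry $-\sigma_{i, j_i}$ in column $j_i$ (with zeros elsewhere). Each tuple $(j_1, \ldots, j_{n-1})$ with $j_i \in \{1, \ldots, n\} \setminus \{i\}$ encodes a functional digraph on $\{1, \ldots, n\}$ in which vertex $i$ points to $j_i$ and vertex $n$ serves as a pure sink.

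Next I would establish two properties of $\det E_{j_1, \ldots, j_{n-1}}$. First, if the functional digraph contains a directed cycle inside $\{1, \ldots, n-1\}$, then the rows of $E$ along that cycle are linearly dependent (a suitable reciprocal-weighted combination of them vanishes), so the determinant is zero. Second, if the digraph is acyclic on $\{1, \ldots, n-1\}$—equivalently, every vertex reaches the sink $n$ by following its out-edges, so the digraph is a spanning in-arborescence rooted at $n$—then a topological ordering of the vertices in which children precede parents, applied simultaneously to the rows and columns of $E$, yields an upper-triangular matrix with diagonal entries $\sigma_{i, j_i}$. Since such a symmetric permutation is a similarity transformation by a permutation matrix, it preserves the determinant, giving $\det E_{j_1, \ldots, j_{n-1}} = \prod_i \sigma_{i, j_i}$ with a manifestly positive sign.

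Combining these observations collapses the sum over tuples to a sum over spanning in-arborescences rooted at $n$—equivalently, spanning trees of the circuit—each contributing a non-negative product of conductances. The hardest step will be verifying the two properties of $E$ cleanly: one must check that the cycle-row combination really yields zero even when some $\sigma_{ij}$ along the cycle vanish (handled by treating such rows as identically zero), and that in the acyclic case the chosen topological order always places the single off-diagonal nonzero entry of each row strictly above the diagonal. Once both checks are in place, the lemma follows at once.
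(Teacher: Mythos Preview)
Your argument is correct and complete: the multilinear expansion over functional digraphs, followed by the cycle-vanishing and acyclic-triangularization steps, is precisely a proof of the weighted Matrix--Tree Theorem, and it yields the explicit formula $\det\Sigma' = \sum_{T}\prod_{e\in T}\sigma_e$ over spanning trees of the circuit. Both technical checks you flagged go through as you describe.

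This is a genuinely different route from the paper's. The paper argues by induction on $n$: it fixes an arbitrary $\sigma_{12}$, uses row and column operations to confine all occurrences of $\sigma_{12}$ to the single top-left entry, and then observes that the coefficient of $\sigma_{12}$ in $\det\Sigma'$ is a determinant of the same structural form but one size smaller (with the merged conductances $\sigma'_{i2}=\sigma_{i1}+\sigma_{i2}$), hence positive by the induction hypothesis. Since every monomial in $\det\Sigma'$ contains at least one factor $\sigma_{ij}$, positivity of all such coefficients forces positivity of every monomial. Your approach is more combinatorial and delivers strictly more information---the explicit spanning-tree expansion, which also makes the later remarks about connectivity and about $\det\Sigma'$ being linear in each $\sigma_{ij}$ immediate. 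The paper's approach, on the other hand, stays entirely within elementary determinant manipulations and requires no graph-theoretic interpretation, which may better match the intended audience.
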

\begin{proof}
We shall proceed with the proof by induction in $n$. For example, for $n=3$,
\be
\Sigma^{\prime}=
\begin{pmatrix}
 c_{1} & -\sigma_{12} \\
 -\sigma_{21} & c_{2} 
\end{pmatrix},
\ee
where $c_1=\sigma_{12}+\sigma_{13}$, $c_2=\sigma_{21}+\sigma_{23}$ and $\sigma_{12}=\sigma_{21}$. Thus the determinant
\be
\det \Sigma^{\prime}=c_1 c_2 - \sigma_{12}^2=\sigma_{12}\sigma_{23}+\sigma_{12}\sigma_{13}+\sigma_{13}\sigma_{23}.
\ee
In addition, in Sec. \ref{Sec:Wheatstone} we saw the statement of the Lemma to be true for $n=4$.

Assume now that the statement holds for $n=k$. In order to prove that it also is true for $n=k+1$ ($\Sigma^\prime$ is then $k\times k$), it is sufficient to demonstrate that the coefficient in front of each $\sigma_{ij}$, as it enters $\det \Sigma^\prime$, is positive. Without loss of generality, we can focus on $\sigma_{12}\equiv \sigma_{21}$. In the determinant 
\be
\det \Sigma^\prime=
\begin{vmatrix}
c_1 & -\sigma_{12} & -\sigma_{13} & \dots &-\sigma_{1,k}\\
-\sigma_{21} & c_{2} & -\sigma_{23} & \dots&-\sigma_{2,k}\\
-\sigma_{31} & -\sigma_{32} & c_{3} & \dots &-\sigma_{3,k}\\
\vdots & \vdots & \vdots &\ddots  &\vdots\\
-\sigma_{k,1} & -\sigma_{k,2} & -\sigma_{k,3} & \dots &c_{k}
\end{vmatrix},
\ee
$\sigma_{12}$ appears in the four upper-left entries, including inside $c_1$ and $c_2$, but nowhere else in the rest of the matrix. Using this fact and properties of determinants, we can rewrite the expression so that $\sigma_{12}$ will only appear in one place. Specifically we can replace the second row with the sum of itself and the first one and then repeat this procedure with the same columns 
\begin{eqnarray}
\det \Sigma^\prime &=&
\begin{vmatrix}
c_1 & -\sigma_{12} & -\sigma_{13} & \dots &-\sigma_{1,k} \\
c^\prime_1 & c^\prime_{2} & -(\sigma_{13}+\sigma_{23}) & \dots&-(\sigma_{1,k}+\sigma_{2,k})\\
-\sigma_{31} & -\sigma_{32} & c_{3} & \dots &-\sigma_{3,k}\\
\vdots & \vdots & \vdots &\ddots &\vdots \\
-\sigma_{k,1} & -\sigma_{k,2} & -\sigma_{k,3} & \dots &c_{k}
\end{vmatrix} \\
&=&
\begin{vmatrix}
c_1 &  c^\prime_{1} &-\sigma_{13} & \dots &-\sigma_{1,k} \\
c^\prime_1 & c^\prime_1+c^\prime_{2} & -(\sigma_{13}+\sigma_{23}) & \dots&-(\sigma_{1,k}+\sigma_{2,k})\\
-\sigma_{31} & -(\sigma_{31}+\sigma_{32}) & c_{3} & \dots &-\sigma_{3,k}\\
\vdots & \vdots & \vdots &\ddots &\vdots \\
-\sigma_{k,1} & -(\sigma_{k,1} +\sigma_{k,2}) & -\sigma_{k,3} & \dots &c_{k}
\end{vmatrix} .\nonumber
\end{eqnarray} 
Here $c^\prime_1=c_1-\sigma_{12}$ and $c^\prime_2=c_2-\sigma_{12}$ do not contain $\sigma_{12}$. Hence the only entry that depends on $\sigma_{12}$ is $c_1$, and it does so linearly. We can expand the latter determinant as
\begin{eqnarray}
\det \Sigma^\prime &=&\sigma_{12}\begin{vmatrix}
 c^\prime_{2} & -(\sigma_{13}+\sigma_{23}) & \dots&-(\sigma_{1,k}+\sigma_{2,k})\\
 -(\sigma_{31}+\sigma_{32}) & c_{3} & \dots &-\sigma_{3,k}\\
\vdots & \vdots &\ddots &\vdots \\
 -(\sigma_{k,1} +\sigma_{k,2}) & -\sigma_{k,3} & \dots &c_{k}
\end{vmatrix}\nonumber\\
&+&(\text {terms not containing } \sigma_{12}).\label{DetExp}
\end{eqnarray}
Define $\sigma^\prime_{i,2}:=\sigma_{i,1}+\sigma_{i,2}$ for $i=3,...,k$. Then the determinant multiplying $\sigma_{12}$ in Eq. (\ref{DetExp}) will be of the same form as the original determinant of $\Sigma^\prime$. The size of this determinant is $(k-1)\times (k-1)$ (which corresponds to $n=k$), so by the induction hypothesis it must be positive. Since the choice of $\sigma_{12}$ was arbitrary we have proved that the coefficient in front of each edge conductance in $\Sigma^\prime$ is positive. Therefore $\Sigma^\prime$, as a polynomial in $\sigma$'s, has only positive terms, which proves the lemma.
\end{proof}
Moreover, as $ \Sigma^{\prime\prime}$ has a form very similar to that of $\Sigma^\prime$, the proof above would work for its determinant as well. 

\subsection{Equivalent conductance as a function of edge conductances}\label{SigmaFun}
From the consideration above it follows that both $\det \Sigma^{\prime}$ and $\det \Sigma^{\prime\prime}$ are linear functions of edge conductances. Thus the equivalent conductance can be written as
\be\label{sigmaStar}
\sigma_{\rm eq}=\frac{A\sigma_*+B}{C\sigma_*+D}
\ee
for an arbitrary edge conductance $\sigma_*$. Here $A$, $B$, $C$, and $D$ depend on all the edge conductances other than $\sigma_*$.

There is an important property of the equivalent conductance that is also worth mentioning. In Sec. \ref{Sec:Wheatstone} we saw that a special arrangement of some edge conductances (1, 2, 4 and 5) resulted in zero current through the middle edge (3). In that case, it is easy to see that $\sigma_{\rm eq}$ does not depend on $\sigma_3$. Once such a symmetry is recognized, one can do two things without affecting the equivalent conductance:   
\begin{itemize}
\item{Throw the middle edge away, i.e., set $\sigma_3=0$. This can be done, since there is no current through this edge.}
\item{Short-circuit the top and bottom nodes, i.e., put $\sigma_3 \rightarrow \infty$. This can be done, since the nodal potentials $V_2$ and $V_3$ are equal.}
\end{itemize}
In both cases, the resulting circuit can be easily simplified and the equivalent conductance can be computed according to Eqs. (\ref{sigmaPar}) and (\ref{sigmaSer}). Importantly, these two resulting circuits are {\em different}, but have the same equivalent conductance. This may not be as obvious for a more complicated circuit. Suppose there is special edge (with conductance $\sigma_*$) such that performing the two operations above yields the same equivalent conductance. We can prove then that $\sigma_{\rm eq}$ does not depend on $\sigma_*$, as follows. Requiring in (\ref{sigmaStar}) that $\sigma_{\rm eq}(0)=\sigma_{\rm eq}(\infty)$ yields $A/C=B/D$, which implies that $\sigma_*$ drops out from the equivalent conductance.

\section{Simplifiable circuits}\label{App:Simplifiable}
In this section we consider circuits that contain elements in series or in parallel, as well as circuits that can be reduced for symmetry reasons. Specifically, we are interested in the form of the conductance matrix $\sigma_{ij}$ and the corresponding $\Sigma$-matrix.  

\subsection{Connection in series}\label{App:Series}
If there is a pair of edges in series, the node ($k$) shared by these edges would be two-valent and the corresponding row/column in $\sigma_{ij}$ would have only two non-zero entries, say, $\sigma_{kl}$ and $\sigma_{km}$. This also implies that the nodes $l$ and $m$ are not connected directly, i.e., $\sigma_{lm}=0$. From physics we know that the two edges can be replaced by one with the equivalent conductance given by (\ref{sigmaSer}). Therefore, we can reduce the size of the conductance matrix by crossing out the $k^{\rm th}$ row and column and by setting $\sigma_{lm}\equiv \sigma_{ml}:=\left(\sigma_{kl}^{-1}+\sigma_{km}^{-1}\right)^{-1}$.

It is also insightful to investigate this statement mathematically. Without loss of generality, we can set $k=2$, $l=3$, and $m=4$. Then the (top-left part of the) $\Sigma^\prime$-matrix for such a circuit would look like
\be\label{SigmaSer}
\Sigma^\prime=
\begin{pmatrix}
c_1 & 0 & -\sigma_{13} &-\sigma_{14} &-\sigma_{15} & \dots \\
0 & c_{2} & -\sigma_{23} &-\sigma_{24} & 0 & \dots\\
-\sigma_{31} & -\sigma_{32} & c_{3} & 0& -\sigma_{35} &\dots \\
-\sigma_{41} & -\sigma_{42} & 0  &c_{4} &-\sigma_{45} &\dots \\ 
-\sigma_{51} & 0 & -\sigma_{52}  &-\sigma_{53} &c_{5} &\dots \\ 
\vdots & \vdots & \vdots   &\vdots & \vdots &\ddots
\end{pmatrix}.
\ee
As explained above, $\sigma_{34}=\sigma_{43}=0$. Since node 2 is only connected to 3 and 4, the non-zero entries of the second row and column are $\sigma_{23}=\sigma_{32}$, $\sigma_{24}=\sigma_{42}$ and $c_2=\sigma_{23}+\sigma_{24}$. On the other hand, the matrix describing the reduced circuit is 
\be\label{SigmaTildeSer}
\tilde\Sigma^\prime=
\begin{pmatrix}
c_1 &  -\sigma_{13} &-\sigma_{14} &-\sigma_{15} & \dots \\
-\sigma_{31} & \tilde c_{3} & \tilde\sigma_{34}& -\sigma_{35} &\dots \\
-\sigma_{41} & \tilde\sigma_{43}  &\tilde c_{4} &-\sigma_{45} &\dots \\ 
-\sigma_{51} & -\sigma_{53}  &-\sigma_{54} &c_{5} &\dots \\ 
\vdots & \vdots & \vdots   &\vdots &\ddots
\end{pmatrix}.
\ee
Here $\tilde\sigma_{34}=\left(\sigma_{23}^{-1}+\sigma_{24}^{-1}\right)^{-1}\equiv \sigma_{23}\sigma_{24}/(\sigma_{23}+\sigma_{24})$, whereas $\tilde c_3$ and $\tilde c_4$ include $\tilde\sigma_{34}=\tilde\sigma_{43}$. We can similarly introduce the sub-matrices $\Sigma^{\prime\prime}$ and $\tilde\Sigma^{\prime\prime}$. We shall now prove the following 
\begin{lemma}\label{lemma4}
\be\label{CeqSer}
\sigma_{\rm eq} = \f{\det \Sigma^{\prime}}{\det \Sigma^{\prime\prime}} = \f{\det \tilde\Sigma^{\prime}}{\det \tilde\Sigma^{\prime\prime}}=\tilde\sigma_{\rm eq}
\ee
\end{lemma}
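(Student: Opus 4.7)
My plan is to prove Lemma \ref{lemma4} by establishing the two separate identities
\[
\det \Sigma^\prime = c_2 \, \det \tilde\Sigma^\prime, \qquad \det \Sigma^{\prime\prime} = c_2 \, \det \tilde\Sigma^{\prime\prime},
\]
after which (\ref{CeqSer}) follows just by dividing them. The crucial structural feature is that the two-valent node $2$ contributes an especially sparse row and column to $\Sigma^\prime$ in (\ref{SigmaSer}): only the diagonal entry $c_2 = \sigma_{23}+\sigma_{24}$ and the two off-diagonal entries $-\sigma_{23}$ and $-\sigma_{24}$ are non-zero, and crucially the $(3,4)$ and $(4,3)$ slots already vanish because $\sigma_{34}=0$ in the original circuit. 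This makes $c_2$ an ideal pivot for Gaussian elimination.

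For the first identity I would perform the two row operations
\[
\text{Row }3 \to \text{Row }3+\frac{\sigma_{32}}{c_2}\,\text{Row }2, \qquad \text{Row }4 \to \text{Row }4+\frac{\sigma_{42}}{c_2}\,\text{Row }2,
\]
which leave $\det \Sigma^\prime$ invariant and kill the two surviving entries in column $2$. Rows $5, 6, \ldots$ are unchanged because Row $2$ has zeros in every column from $5$ onward. A short calculation then shows that the modified diagonal entries in rows $3$ and $4$ become $c_3 - \sigma_{23}^2/c_2$ and $c_4 - \sigma_{24}^2/c_2$, while the formerly vanishing $(3,4)$ and $(4,3)$ slots turn into $-\sigma_{23}\sigma_{24}/c_2$; using $c_2 = \sigma_{23}+\sigma_{24}$ these are exactly $\tilde c_3$, $\tilde c_4$, and $-\tilde \sigma_{34}$ as defined just above (\ref{SigmaTildeSer}). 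Expanding the resulting determinant along the now-isolated column $2$ then gives $\det \Sigma^\prime = c_2 \det \tilde\Sigma^\prime$. The identical procedure applied to $\Sigma^{\prime\prime}$, whose first row and column encode the same sparse pattern of node $2$, yields $\det \Sigma^{\prime\prime} = c_2 \det \tilde\Sigma^{\prime\prime}$, completing the argument.

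I do not anticipate any real obstacle; the hardest part is simply the book-keeping needed to check that the three altered entries really coincide with $\tilde c_3$, $\tilde c_4$, $\tilde\sigma_{34}$ of the reduced matrix. That boils down to the one-line identity $\tilde c_i - c_i = \tilde \sigma_{34} - \sigma_{2i}$ for $i=3,4$, which is immediate from $c_2 = \sigma_{23}+\sigma_{24}$. The pivoting is always legal because $c_2 > 0$ whenever node $2$ is a genuine series junction, and the common factor $c_2$ cancels cleanly in the ratio.
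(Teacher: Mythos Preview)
Your proposal is correct and establishes the same key intermediate identity as the paper, namely $\det\Sigma^{\prime}=c_2\,\det\tilde\Sigma^{\prime}$ (and its double-primed companion), but you reach it by a different and somewhat cleaner route. The paper expands $\det\Sigma^{\prime}$ by Laplace along the sparse second row, obtaining three cofactor determinants, then splits the $c_2$ term using $c_2=\sigma_{23}+\sigma_{24}$ and performs a sequence of column additions and subtractions to merge everything back into $(\sigma_{23}+\sigma_{24})\det\tilde\Sigma^{\prime}$. You instead pivot on $c_2$ directly: two elementary row operations clear column~$2$, and the resulting $2\times2$ Schur-complement block in rows/columns $3,4$ is immediately recognized as $\tilde c_3,\tilde c_4,-\tilde\sigma_{34}$, so a single cofactor expansion along the cleared column finishes the job. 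Your approach is shorter and makes the emergence of the series-combination $\tilde\sigma_{34}=\sigma_{23}\sigma_{24}/c_2$ transparent as a Schur complement; the paper's approach has the minor advantage of never dividing by $c_2$ along the way, so it stays polynomial throughout and does not need to invoke $c_2>0$ until the very end.
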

\begin{proof}
In what follows, for the sake of compactness we omit the dots in the determinants. We start by expanding $\det \Sigma^{\prime}$ from (\ref{SigmaSer}) in the elements of the second row
\bq
\det \Sigma^\prime &=&
c_2\begin{vmatrix}
c_1 &  -\sigma_{13} &-\sigma_{14} &-\sigma_{15}  \\
-\sigma_{31} & c_{3} & 0& -\sigma_{35} \\
-\sigma_{41} & 0 &c_{4} &-\sigma_{45}\\ 
-\sigma_{51} & -\sigma_{53}  &-\sigma_{54} &c_{5} 
\end{vmatrix}\\
&+&\sigma_{23}\begin{vmatrix}
c_1 &  0 &-\sigma_{14} &-\sigma_{15}  \\
-\sigma_{31} & -\sigma_{32} & 0& -\sigma_{35} \\
-\sigma_{41} & -\sigma_{42}  &c_{4} &-\sigma_{45}\\ 
-\sigma_{51} & 0  &-\sigma_{54} &c_{5} 
\end{vmatrix}+
\sigma_{24}\begin{vmatrix}
c_1 &  -\sigma_{13} &0&-\sigma_{15}  \\
-\sigma_{31} & c_{3} & -\sigma_{32}& -\sigma_{35} \\
-\sigma_{41} & 0  &c_{4} &-\sigma_{42}\\ 
-\sigma_{51} & -\sigma_{53}  &0 &c_{5} 
\end{vmatrix}. \nonumber
\eq
Note that we switched columns 2 and 3 in the last determinant. Using that $c_2=\sigma_{23}+\sigma_{34}$, we can collect similar terms as 
\be
\det \Sigma^\prime =
\sigma_{23}\begin{vmatrix}
c_1 &  -\sigma_{13} &-\sigma_{14} &-\sigma_{15}  \\
-\sigma_{31} & c_{3}-\sigma_{32} & 0& -\sigma_{35} \\
-\sigma_{41} & -\sigma_{42} &c_{4} &-\sigma_{45}\\ 
-\sigma_{51} & -\sigma_{53}  &-\sigma_{54} &c_{5} 
\end{vmatrix}\\
+\sigma_{24}\begin{vmatrix}
c_1 &  -\sigma_{13} &-\sigma_{14} &-\sigma_{15}  \\
-\sigma_{31} &  c_3& -\sigma_{32}& -\sigma_{35} \\
-\sigma_{41} &   0&c_{4}-\sigma_{42} &-\sigma_{45}\\ 
-\sigma_{51} & -\sigma_{53}  &-\sigma_{54} &c_{5} 
\end{vmatrix}.
\ee
It is now easy to see that the sum of columns 2 and 3 is the same for both determinants. Thus we can replace column 2 in each term with this sum, which would allow us to combine the two determinants into one
\be
\begin{vmatrix}
c_1 &  -(\sigma_{13}+\sigma_{14}) &-(\sigma_{23}+\sigma_{24})\sigma_{14} &-\sigma_{15}  \\
-\sigma_{31} & c_{3}-\sigma_{23} & \sigma_{23}\sigma_{24}& -\sigma_{35} \\
-\sigma_{41} & c_4-\sigma_{42} &(\sigma_{23}+\sigma_{24})c_{4}-\sigma_{24}^2 &-\sigma_{45}\\ 
-\sigma_{51} & -(\sigma_{53}+\sigma_{54})  &-(\sigma_{23}+\sigma_{24})\sigma_{54} &c_{5} 
\end{vmatrix}.
\ee
We can now factor $(\sigma_{23}+\sigma_{24})$ out of the third column and use $(\sigma_{23}\sigma_{24})/(\sigma_{23}+\sigma_{24})\equiv \tilde\sigma_{34}$ (in the second entry) together with $\sigma^2_{24}/(\sigma_{23}+\sigma_{24})\equiv \sigma_{24}-\tilde\sigma_{34}$ (in the third entry) to obtain
\be
(\sigma_{23}+\sigma_{24})\begin{vmatrix}
c_1 &  -(\sigma_{13}+\sigma_{14}) &-\sigma_{14} &-\sigma_{15}  \\
-\sigma_{31} & c_{3}-\sigma_{23} & \tilde\sigma_{34}& -\sigma_{35} \\
-\sigma_{41} & c_4-\sigma_{42} &c_{4}-\sigma_{24}+\tilde\sigma_{34} &-\sigma_{45}\\ 
-\sigma_{51} & -(\sigma_{53}+\sigma_{54})  &-\sigma_{54} &c_{5} 
\end{vmatrix},
\ee
where the third diagonal entry is precisely $c_{4}-\sigma_{24}+\tilde\sigma_{34}\equiv \tilde c_4$. Similarly $c_{3}-\sigma_{23}+\tilde\sigma_{34}\equiv \tilde c_3$. With that in mind, subtracting the third column from the second one yields
\be
\det \Sigma^\prime =(\sigma_{23}+\sigma_{24})\begin{vmatrix}
c_1 &  -\sigma_{13}&-\sigma_{14} &-\sigma_{15}  \\
-\sigma_{31} & \tilde c_{3} & -\tilde\sigma_{34}& -\sigma_{35} \\
-\sigma_{41} & -\tilde \sigma_{34}&\tilde c_{4} &-\sigma_{45}\\ 
-\sigma_{51} & -\sigma_{53}  &-\sigma_{54} &c_{5} 
\end{vmatrix} \equiv (\sigma_{23}+\sigma_{24})\det \tilde\Sigma^\prime.
\ee
Repeating the same consideration without the first row and column we would get 
\be
\det \tilde\Sigma^{\prime\prime} =(\sigma_{23}+\sigma_{24})
\begin{vmatrix}
 \tilde c_{3} & -\tilde\sigma_{34}& -\sigma_{35} \\
 -\tilde \sigma_{34}&\tilde c_{4} &-\sigma_{45}\\ 
 -\sigma_{53}  &-\sigma_{54} &c_{5} 
\end{vmatrix}\equiv (\sigma_{23}+\sigma_{24})\det \tilde\Sigma^{\prime\prime},
\ee
from which (\ref{CeqSer}) follows.
\end{proof}

\subsection{Connection in parallel}
If two nodes are connected by multiple edges, those edges can be replaced by one with the total conductance from (\ref{sigmaPar}). This is the reason we did not consider multiple edges from the start, without any loss of generality. 

\subsection{Short-circuiting two nodes}
The two above simplifications do not affect the equivalent conductance of the whole circuit. A more interesting scenario (that does, in general) is short-circuiting two nodes, say $i$ and $j$, by connecting them with an ideal wire. Mathematically this corresponds to letting $\sigma_{ij}\rightarrow \infty$, whereas the physical implication is the equality of the corresponding nodal potentials $V_i=V_j$. Such nodes can be simply merged together, becoming one. After the merging, some edges that were not parallel may become parallel. Thus we can use the idea of the previous paragraph.

In essence, this short-circuiting eliminates one unknown potential, hence reduces the size of the conductance matrix. Instead of the two rows/columns ($i$ and $j$) we have only one. The entries of this new row/column are given by 
\begin{equation}\label{shortC}
\sigma^\prime_{ik}=\sigma_{ik}+\sigma_{jk},
\end{equation}
 for any $k\neq i,j$. 

So far we have not discussed the diagonal elements of $\sigma_{ij}$. One reason being is that they do not affect the $\Sigma$-matrix, which has been most relevant in this paper. In fact, we can generalize the procedure of going from $\sigma_{ij}$ to $\Sigma$ as follows
\begin{equation}\label{sigma_to_Sigma}
\Sigma_{ij}={\rm Diag}(c_1, c_2, ..., c_n) - \sigma_{ij},
\end{equation}
with the same $c$'s as before: $c_{i}=\sum\limits_{j=1}^n \sigma_{ij}$. Clearly, no matter what $\sigma_{ii}$ are, they drop out from $\Sigma$, according to (\ref{sigma_to_Sigma}). 

At the same time, what would be the meaning of $\sigma_{ii}$? How can a node be connected to itself? Now, in the spirit of the short-circuiting scenario above, one can think of each node as ``self short-circuited''. In other words, $\sigma_{ii}=\infty$. In fact, this observation will make the short-circuiting recipe (\ref{shortC}) valid for diagonal elements as well. Notice that all of the manipulations with $\sigma_{ij}$ discussed in this appendix work in the same exact way for the $\Sigma$-matrix. In other words, these manipulations ``commute'' with (\ref{sigma_to_Sigma}).

To conclude, we would like to comment of the symmetry issue. Some nodal potentials may turn out to be equal on symmetry grounds, even if the corresponding nodes are not short-circuited. For instance, as we pointed out in Section \ref{Sec:Discussion}, the Wheatstone bridge circuit has $V_2=V_3$, if (\ref{WheatSym}) is satisfied. Another example is the classic resistor cube problem, where there are triples of (not connected) nodes having the same potential. Importantly, in these symmetric situations, the short-circuiting does not affect the equivalent conductance. As such symmetries are not as manifest in more complicated circuits, it would be interesting to come up with a way of detecting the equipotential nodes by looking at the form of the $\Sigma$ matrix (\ref{Cube}).
\end{appendix}

\end{document}